\newtheorem{theorem}{Theorem}
\newtheorem{lemma}{Lemma}
\newtheorem{definition}{Definition}
\newtheorem{assumption}{Assumption}
\newcommand{\sr}{\stackrel}
\newcommand{\tri}{\sr{\triangle}{=}}
\newcommand{\noi}{\noindent}
\newcommand{\be}{\begin{equation}}
\newcommand{\ee}{\end{equation}}
\newcommand{\bea}{\begin{eqnarray}}
\newcommand{\eea}{\end{eqnarray}}
\newcommand{\bes}{\begin{eqnarray*}}
\newcommand{\ees}{\end{eqnarray*}}
\newcommand{\bfi}{\begin{figure}}
\newcommand{\bfit}{\begin{figure}[t]}
\newcommand{\bfib}{\begin{figure}[b]}
\newcommand{\bfih}{\begin{figure}[h]}
\newcommand{\bfip}{\begin{figure}[p]}
\newcommand{\efi}{\end{figure}}
\newcommand{\bi}{\begin{itemize}}
\newcommand{\ei}{\end{itemize}}
\newcommand{\ben}{\begin{enumerate}}
\newcommand{\een}{\end{enumerate}}
\begin{document}

\sloppy

\title{Optimal Nonstationary Reproduction Distribution for Nonanticipative RDF on Abstract Alphabets}

\author{
  \IEEEauthorblockN{Photios A. Stavrou, Charalambos D. Charalambous and Christos K. Kourtellaris}
  \IEEEauthorblockA{ECE Department, Univercity of Cyprus, Nicosia, Cyprus\\
    {\it Email:\{stavrou.fotios,~chadcha,~kourtellaris.christos\}@ucy.ac.cy}} 
}

\maketitle

\begin{abstract}
THIS PAPER IS ELIGIBLE FOR THE STUDENT PAPER AWARD. In this paper we introduce a definition for nonanticipative Rate Distortion Function (RDF) on abstract alphabets, and we invoke weak convergence of probability measures to show various of its properties, such as, existence of the optimal reproduction conditional distribution, compactness of the fidelity set, lower semicontinuity of the RDF functional, etc. Further, we derive the closed form expression of the optimal nonstationary reproduction distribution. This expression is computed recursively backward in time. Throughout the paper we point out an operational meaning of the nonanticipative RDF by recalling the coding theorem derive in \cite{tatikonda2000}, and we state relations to Gorbunov-Pinsker's nonanticipatory $\epsilon-$entropy \cite{gorbunov-pinsker}. 
\end{abstract}

\section{Introduction}\label{introduction}
\par Motivated by communication and control applications in which lossy nonanticipative reproductions of nonstationary sources subject to a fidelity set are desirable, in this paper we introduce a definition for nonanticipative RDF. 

One envisioned application is quantization of two dimensional sources,  represented by space-time processes, in which an IID assumption is impossed on the spatial index, when the time index is fixed. Such a model is considered in \cite{tatikonda2000} to  derive a coding theorem which states that  the Optimal Performance Theoretically Achievable (OPTA) by sequential quantizers is given by the so-called sequential RDF, expressed in terms of mutual information and a conditional independence impossed on the fidelity set. Another envisioned application is source-channel matching \cite{gastpar2003,stavrou-charalambous2013a}, in which nonanticipation of the reproduction distribution is a necessary conditions for the realization of this conditional distribution by encoder-channel-decoder maps that operate without anticipation, and hence the delay incurred on the end-to-end system is limited, or uncoded transmission \cite{gastpar2003}. 

In this paper, the nonanticipative RDF is defined on abstract alphabets,  using  an information measure which is a  special case of directed information \cite{massey90} from the source sequence to the reproduction sequence.  The main contributions are the  treatment of  the nonanticipative RDF without assuming stationarity of the source, the derivation of various results regarding existence of the optimal nonstationary reproduction distribution, its closed form expression via backward recursions, relations to existing coding theorems, and relations to the nonanticipatory $\epsilon-$entropy introduced by Gorbunov and Pinsker \cite{gorbunov-pinsker}. Since nonstationary nonanticipative RDF embeds the stationary nonanticipative RDF, under certain assumptions, our results apply for the stationary case too. Note that our results extend the bounds developed recently in \cite{derpich-ostergaard2012} for stationary Gaussian sources to general sources. 

This paper is organized as follows. Section~\ref{nonanticipative_rdf} formulates the nonanticipative RDF on abstract spaces. Section~\ref{nonanticipatory-sequential_rdf}, discusses the relation between nonanticipative RDF with Gorbunov-Pinsker's nonanticipatory $\epsilon$-entropy and sequential RDF. Section~\ref{existence}  provides the conditions under which the existence of the optimal reproduction conditional distribution of nonanticipative RDF is derived. Finally, Section~\ref{optimal_reconstruction} gives the optimal solution for nonstationary processes. Throughout the paper we also include comments on the reduction of our results to the stationary nonanticipative RDF. 
\section{Nonanticipative RDF on Abstract Spaces}\label{nonanticipative_rdf}
\noi In this section, we define the information theoretic nonanticipative RDF based on the methodology described in \cite{charalambous-stavrou2012}.\\
\noi {\bf Notation.} Let $\mathbb{N} \tri \{0,1,2,\ldots\},$ and $\mathbb{N}^n \tri \{0,1,2,\ldots,n\}$. Introduce two sequence of spaces $\{({\cal X}_n,{\cal B}({\cal X }_n)):n\in\mathbb{N}\}$ and $\{({\cal Y}_n,{\cal B}({\cal Y}_n)):n\in\mathbb{N}\},$ where ${\cal X}_n,{\cal Y}_n, n\in\mathbb{N}$, are Polish spaces, and ${\cal B}({\cal X}_n)$ and ${\cal B}({\cal Y}_n)$ are Borel $\sigma-$algebras of subsets of ${\cal X}_n$ and ${\cal Y}_n$, respectively. 
Points in ${\cal X}^{\mathbb{N}}\tri{{\times}_{n\in\mathbb{N}}}{\cal X}_n,$ ${\cal Y}^{\mathbb{N}}\tri{\times_{n\in\mathbb{N}}}{\cal Y}_n$ are denoted by ${\bf x}\tri\{x_0,x_1,\ldots\}\in{\cal X}^{\mathbb{N}},$ ${\bf y}\tri\{y_0,y_1,\ldots\}\in{\cal Y}^{\mathbb{N}},$ respectively, while their restrictions to finite coordinates by $x^n\tri\{x_0,x_1,\ldots,x_n\}\in{\cal X}_{0,n},$ $y^n\tri\{y_0,y_1,\ldots,y_n\}\in{\cal Y}_{0,n},$ for $n\in\mathbb{N}$.\\
Let ${\cal B}({\cal X}^{\mathbb{N}})\tri\odot_{i\in\mathbb{N}}{\cal B}({\cal X}_i)$ denote the $\sigma-$algebra on ${\cal X}^{\mathbb{N}}$ generated by cylinder sets and similarly for ${\cal B}({\cal Y}^{\mathbb{N}})\tri\odot_{i\in\mathbb{N}}{\cal B}({\cal Y}_i)$, while ${\cal B}({\cal X}_{0,n})$ and ${\cal B}({\cal Y}_{0,n})$ denote the $\sigma-$algebras with bases over $A_i\in{\cal B}({\cal X}_i)$, and $B_i\in{\cal B}({\cal Y}_i),~i=0,1,\ldots,n$, respectively. Let ${\cal Q}({\cal Y};{\cal X})$ denote the set of stochastic kernels on ${\cal Y}$ given ${\cal X}$ \cite{dupuis-ellis97}.
\subsection{Construction of the Measures}
\noi{\bf Source Distribution.} The source distribution $\{p_n(dx_n;x^{n-1}):n\in\mathbb{N}\}$ satisfies the following conditions.\\
{\bf i)} For $n\in\mathbb{N},$ $p_n(\cdot;x^{n-1})$ is a probability measure on ${\cal B}({\cal X }_n);$\\
{\bf ii)} For every $A_n\in{\cal B}({\cal X}_n),~n\in\mathbb{N},$ $p_n(A_n;x^{n-1})$ is a $\odot^{n-1}_{i=0}{\cal B}({\cal X }_i)$-measurable function of $x^{n-1}\in{\cal X}_{0,n-1}$.\\
Any distribution satisfying {\bf i)}, {\bf ii)} is denoted by $p_n(\cdot;\cdot)\in{\cal Q}({\cal X}_n;{\cal X}_{0,n-1})$.\\
Let $A\in{\cal B}({\cal X}_{0,n})$ be a cylinder set of the form $
A\tri\big\{{\bf x}\in{\cal X}^{\mathbb{N}}:x_0\in{A_0},x_1\in{A_1},\ldots,x_n\in{A_n}\big\},~A_i\in{\cal B}({\cal X }_i),~i=0,1,\ldots,n$.
Define a family of measures ${\bf P}(\cdot)$ on ${\cal B}({\cal X}^{\mathbb{N}})$, denoted by the set ${\cal M}_1({\cal X}^{\mathbb{N}})$, given by
\begin{align}
{\bf P}(A)&\tri\int_{A_0}p_0(dx_0)\ldots\int_{A_n}p_n(dx_n;x^{n-1})\equiv{\mu}_{0,n}(A_{0,n})\label{equation2}
\end{align}
where $A_{0,n}=\times_{i=0}^n{A_i}$.
The notation ${\mu}_{0,n}(\cdot)$ is used to denote the restriction of the measure ${\bf P}(\cdot)$ on cylinder sets $A\in{\cal B}({\cal X}_{0,n})$, for $n\in\mathbb{N}$.\\
\noi{\bf Reproduction Distribution.} The reproduction distribution $\{q_n(dy_n;y^{n-1},x^n):n\in\mathbb{N}\}$ satisfies the following conditions.\\
{\bf iv)} For ${n}\in\mathbb{N},$ $q_n(\cdot|y^{n-1},x^{n})$ is a probability measure on ${\cal B}({\cal Y}_n);$\\
{\bf v)} For every $B_n\in{\cal B}({\cal Y}_n),~n\in\mathbb{N},$ $q_n(B_n;y^{n-1},x^{n})$ is a $\odot^{n-1}_{i=0}\big({\cal B}({\cal Y}_i)\odot{\cal B}({\cal X}_i)\big)\odot{\cal B}({\cal X}_n)$-measurable function of $x^{n}\in{\cal X}_{0,n},$ $y^{n-1}\in{\cal Y}_{0,n-1}$.\\
Given a cylinder set $B\tri\big\{{\bf y}\in{\cal Y}^{\mathbb{N}}:y_0{\in}B_0,y_1{\in}B_1,\ldots,y_n{\in}B_n\big\}$, define a family of measures ${\bf Q}(\cdot|{\bf x})$ on ${\cal B}({\cal Y}^{\mathbb{N}})$ by
\begin{align}
{\bf Q}(B|{\bf x})
&\tri\int_{B_0}q_0(dy_0;x_0)\ldots\int_{B_n}q_n(dy_n;y^{n-1},x^n)\label{equation4}\\
&\equiv{\overrightarrow{Q}}_{0,n}(B_{0,n}|x^n),~B_{0,n}\in{\cal B}({\cal Y}_{0,n}).\label{equation4b}
\end{align}
\noi Then ${\bf Q}(\cdot|{\bf x})$ is a unique measure on $({\cal Y}^{\mathbb{N}},{\cal B}({\cal Y}^{\mathbb{N}}))$ for which the family of distributions $\{q_n(dy_n;y^{n-1},x^n):n\in\mathbb{N}\}$ is obtained.\\
Consider any family of measures ${\bf Q}(\cdot|{\bf x})$ on ${\cal B}({\cal Y}^{\mathbb{N}})$ satisfying the following consistency condition.\\
{\bf C1}: If $D\in{\cal B}({\cal Y}_{0,n}),$ then ${\bf Q}(D|{\bf x})$ is ${\cal B}({\cal X}_{0,n})-$measurable function of ${\bf x}\in{\cal X}^{\mathbb{N}}.$\\
The set of such measures is denoted by ${\cal Q}^{\bf C1}({\cal Y}^{\mathbb{N}},{\cal X}^{\mathbb{N}})$.\\ 
Then, for any family of measures ${\bf Q}(\cdot|{\bf x})$ on $({\cal Y}^{\mathbb{N}},{\cal B}({\cal Y}^{\mathbb{N}}))$ satisfying consistency condition {\bf C1} one can construct a collection of functions $\{q_n(dy_n;y^{n-1},x^n):n\in\mathbb{N}\}$ satisfying conditions {\bf iv)} and {\bf v)} which are connected with ${\bf Q}(\cdot|{\bf x})$ via relation (\ref{equation4}) (see also \cite{charalambous-stavrou2012}).\\
By following the methodology in \cite{charalambous-stavrou2012}, given the basic measures ${\bf P}(\cdot)$ on ${\cal X}^{\mathbb{N}}$ and ${\bf Q}(\cdot|{\bf x})$ on ${\cal Y}^{\mathbb{N}}$ satisfying consistency condition {\bf C1}, we can uniquely define the collection of conditional distributions  $\{p_n(\cdot;\cdot)\in{\cal Q}({\cal X}_n;{\cal X}_{0,n-1}):n\in\mathbb{N}\}$ via (\ref{equation2}), and $\{q_n(\cdot;\cdot,\cdot)\in{\cal Q}({\cal Y}_n;{\cal Y}_{0,n-1}\times{\cal X}_{0,n}):n\in\mathbb{N}\}$  via (\ref{equation4}), respectively, and vice versa, hence the distribution of the RV's $\{(X_i,Y_i):i\in\mathbb{N}^n\}$ is well defined.\\
\noi Next, we introduce the information definition of nonanticipative RDF. Given the source distribution ${\bf P}(\cdot)\in{\cal M}_1({\cal X}^{\mathbb{N}})$ and reproduction distribution ${\bf Q}(\cdot|\cdot)\in{\cal Q}^{\bf C1}({\cal Y}^{\mathbb{N}};{\cal X}^{\mathbb{N}})$ define the following measures.\\
{\bf P1}: The joint distribution on ${\cal X}^{\mathbb{N}}\times{\cal Y}^{\mathbb{N}}$ defined uniquely by
\begin{align}
({\mu}_{0,n}&\otimes{\overrightarrow Q}_{0,n})(\times^n_{i=0}A_i{\times}B_i), A_i\in{\cal B}({\cal X}_i),~B_i\in{\cal B}({\cal Y}_i). \nonumber
\end{align}
{\bf P2}: The marginal distribution on ${\cal Y}^{\mathbb{N}}$ defined uniquely for $B_i\in{\cal B}({\cal Y}_i),~i=0,1,\ldots,n$, by
\begin{align}
\nu_{0,n}(\times^n_{i=0}B_i)=({\overleftarrow P}_{0,n}\otimes{\overrightarrow Q}_{0,n})(\times^n_{i=0}({\cal X}_i\times{B}_i)).\nonumber
\end{align}
{\bf P3}: The product distribution ${\overrightarrow\Pi}_{0,n}:{\cal B}({\cal X}_{0,n})\odot{\cal B}({\cal Y}_{0,n})\mapsto[0,1]$ defined uniquely for $A_i\in{\cal B}({\cal X}_i)$,~$B_i\in{\cal B}({\cal Y}_i),~i=0,1,\ldots,n$, by
\begin{align}
{\overrightarrow\Pi}_{0,n}(\times^n_{i=0}(A_i{\times}B_i))&\tri({\mu}_{0,n}\otimes\nu_{0,n})(\times^n_{i=0}(A_i{\times}B_i)).\nonumber
\end{align}
\noi The information theoretic measure of interest is a special case of directed information \cite{tatikonda2000} defined by relative entropy $\mathbb{D}(\cdot||\cdot)$\footnote{Unless stated otherwise, integrals with respect to measures are over the spaces on which these are defined.}
\begin{align}
&I_{\mu_{0,n}}(X^n\rightarrow{Y}^n)\tri\mathbb{D}({\mu}_{0,n} \otimes {\overrightarrow Q}_{0,n}||{\overrightarrow\Pi}_{0,n})\label{equation33}\\
&=\int \log \Big( \frac{{\overrightarrow Q}_{0,n}(d y^n|x^n)}{\nu_{0,n}(dy^n)}\Big)({\mu}_{0,n}\otimes {\overrightarrow Q}_{0,n})(dx^n,dy^n)\label{equation203}\\
&\equiv{\mathbb{I}}_{X^n\rightarrow{Y^n}}({\mu}_{0,n}, {\overrightarrow Q}_{0,n}).\label{equation7a}
\end{align}
The equivalence between (\ref{equation33}) and (\ref{equation203}) follows from the Radon-Nikodym Derivative (RND). The notation ${\mathbb{I}}_{X^n\rightarrow{Y^n}}(\cdot,\cdot)$ indicates the functional dependence of $I_{\mu_{0,n}}(X^n\rightarrow{Y^n})$ on $\{{\mu}_{0,n}, {\overrightarrow Q}_{0,n}\}$.
 
\subsection{Nonanticipative RDF}
\noi We are now ready to introduce the information definition of nonanticipative RDF.
The distortion function $d_{0,n}(x^n,y^n):{\cal X}_{0,n}\times{\cal Y}_{0,n}\mapsto[0,\infty)$ is a measurable function, and the fidelity of reproduction is defined by
\begin{align}
&\overrightarrow{\cal Q}_{0,n}(D)\tri\Big\{\overrightarrow{Q}_{0,n} \in{\cal Q}^{\bf C1}({\cal Y}_{0,n};{\cal X}_{0,n}):\nonumber\\
&\ell_{d_{0,n}}({\overrightarrow{Q}}_{0,n})\tri\frac{1}{n+1}\int d_{0,n}({x^n},{y^n})(\mu_{0,n}\otimes\overrightarrow{Q}_{0,n})(d{x}^{n},d{y}^{n})\nonumber\\
&\qquad\qquad\qquad\leq D\Big\},~D\geq0. \label{eq2}
\end{align}
The information nonanticipative RDF is defined by 
\begin{align}
\overrightarrow{R}_{0,n}(D) \tri  \inf_{{\overrightarrow{Q}_{0,n}\in\overrightarrow{\cal Q}_{0,n}(D)}}\mathbb{I}_{X^n\rightarrow{Y^n}}(\mu_{0,n},{\overrightarrow Q}_{0,n}).
\label{ex12}
\end{align}
If the infimum in (\ref{ex12}) does not exist then $\overrightarrow{R}_{0,n}(D)=\infty$.\\
The nonanticipative RDF rate is defined by 
\begin{align}
\overrightarrow{R}(D)=\lim_{n\rightarrow\infty}\frac{1}{n+1}\overrightarrow{R}_{0,n}(D)\label{equation22}
\end{align}
provided the limit exists.


\section{Nonanticipatory $\epsilon$-Entropy, Message Generation Rates and Sequential RDF}\label{nonanticipatory-sequential_rdf}
\noi In this section, we establish some preliminary relations between (\ref{ex12}), (\ref{equation22}) and 1) Gorbunov-Pinsker's nonanticipatory $\epsilon$-entropy and message generation rates \cite{gorbunov-pinsker} and 2) sequential RDF and coding theorem \cite{tatikonda2000}.
\subsection{Nonanticipatory $\epsilon$-Entropy and Message Generation Rates}

\noi We recall Gorbunov-Pinsker's definition of nonanticipatory $\epsilon$-entropy \cite{gorbunov-pinsker}. Given a source $P_{X^n}\in{\cal M}_1({\cal X}_{0,n})$ and a reproduction $P_{Y^n|X^n}\in{\cal Q}({\cal Y}_{0,n};{\cal X}_{0,n})$ introduce the fidelity set 
\begin{align*}
&Q_{0,n}(D)\tri\Big\{P_{Y^n|X^n}(dy^n|x^n):\\
&\frac{1}{n+1}\int{d}_{0,n}(x^n,y^n)P_{Y^n|X^n}(dy^n|x^n)\otimes{P}_{X^n}(dx^n)\leq{D}\Big\}.
\end{align*}
\noi Next we introduce two definitions from \cite{gorbunov-pinsker}.
\begin{definition} {\bf {(1)}} The source $X^{\infty}\tri\{X_i:~i\in\mathbb{N}\}$ is called ``specified" if $P_{Y^k|X^k}\in{Q}_{0,k}(D)$ and $P_{Y_{k+1}^n|X_{k+1}^n}\in{Q}_{k+1,n}(D)$ implies $P_{Y^k,Y_{k+1}^n|X^k,X_{k+1}^n}\in{Q}_{0,n}(D)$, and it is called ``consistent" if the reverse holds.\\
{\bf{(2)}} The source is called stationary if $\{X_i:~i\in\mathbb{N}\}$ is stationary and for any $k$, $Q_{0,n}(D)$ and $Q_{k,n+k}(D)$ are copies of the same set.
\end{definition}
\noi Typical example are stationary sources with single letter fidelities $d_{0,n}(x^n,y^n)=\sum_{i=0}^n\rho(x_i,y_i)$. For a specified source, Gorbunov and Pinsker restricted the set $Q_{0,n}(D)$ to those reproduction distributions which satisfy the Markov chain (MC) $X_{n+1}^\infty\leftrightarrow{X^n}\leftrightarrow{Y^n}\Leftrightarrow{P}_{Y^n|X^{\infty}}(dy^n|x^{\infty})={P}_{Y^n|X^n}(dy^n|x^n)-a.s.$, $n=0,1,\ldots$, and introduced the nonanticipatory $\epsilon$-entropy defined by
\begin{align}
R_{0,n}^{na}(D)\tri\inf_{\substack{P_{Y^n|X^n}\in{Q}_{0,n}(D):\\X^n_{i+1}\leftrightarrow{X^i}\leftrightarrow{Y^i},~i=0,1,\ldots,n-1}}I(X^n;Y^n).\label{equation15}
\end{align} 
\noi If the infimum in (\ref{equation15}) does not exist then $R_{0,n}^{na}(D)=\infty$.\\
The difference between the classical RDF \cite{berger} and (\ref{equation15}) is the presence of the MC, which implies that for each $i$, $Y_i$ is a function of the past and present source symbols $X^i$, and it is independent of the future source symbols $X^n_{i+1}$, $n\in\mathbb{N}$.\\
Moreover, the authors in \cite{gorbunov-pinsker} also introduced the nonanticipatory message generation rate of the source defined by
\begin{align}
R^{na}(D)=\lim_{n\rightarrow\infty}\frac{1}{n+1}R_{0,n}^{na}(D)\label{equation16}
\end{align}
provided the limit exists.\\
\noi An alternative definition of the nonanticipatory message generation rate of the source is defined by \cite{gorbunov-pinsker} 
\begin{align}
R^{na,+}(D)=\inf_{\substack{P_{Y^\infty|X^\infty}\in{Q}_{0,\infty}(D):\\X^\infty_{i+1}\leftrightarrow{X^i}\leftrightarrow{Y^i},~i=0,1,\ldots}}\lim_{n\rightarrow\infty}\frac{I(X^n;Y^n)}{n+1}\label{equation23}
\end{align}
whenever the limit exists. Clearly, in general $R^{na,+}(D)\geq{R}^{na}(D)$.\\
The main results derived in \cite{gorbunov-pinsker} are the following.
\begin{description}
\item[{\bf GP 1}:] If the source is stationary for some finite $k$, i.e., $R^{na}_{0,k}(D)<{d}=constant<\infty$, then $R^{na}(D)$ is defined and it is finite, and $R^{na}(D)\geq{R}^{na,+}(D)$, hence $R^{na}(D)={R}^{na,+}(D)$.
\item[{\bf GP 2}:] If the source is stationary and consistent then $R^{na}(D)$ is always defined, and either 1) for some finite $k$, $R^{na}_{0,k}(D)<\infty$ which implies $\lim_{k\rightarrow\infty}\frac{1}{k+1}{R}^{na}_{0,k}(D)=R^{na}(D)<\infty$, or 2) for any $k\in\mathbb{N}$, ${R}^{na}_{0,k}(D)=R^{na}(D)=\infty$.
\item[{\bf GP 3}:] For a single letter distortion and stationary source, if $R^{na}(D)=R^{na,+}(D)$ then the analysis of infimum in (\ref{equation23}) is realizable in terms of stationary source-reproduction pairs $\{(X_i,Y_i):~i\in\mathbb{N}\}$ such that $P_{Y^n|X^\infty}(dy^n|x^\infty)=P_{Y^n|X^n}(dy^n|x^n)$ and $P_{Y^n|X^n}\otimes{P}_{X^n}\in{Q}_{0,n}(D)$.
\end{description}
With respect to Gorbunov-Pinsker's definition of nonanticipatory $\epsilon$-entropy and message generation rate, 1) we show that (\ref{equation15}) reduces to the information definition of nonanticipative RDF (\ref{ex12}), 2) under certain general conditions the infimum in (\ref{equation15}) exists and it is finite, and, 3) we derive an expression of the optimal reproduction distribution which achieves the infimum in (\ref{ex12}) or (\ref{equation15}).\\
First, we establish the connection between nonanticipatory $\epsilon$-entropy (\ref{equation15}) and nonanticipative RDF (\ref{ex12}) by utilizing the following general equivalent statements of MCs.
\begin{lemma}\label{equivalent_statements}
The following are equivalent for each $n\in\mathbb{N}$.
\begin{enumerate}
\item[(1)] $P_{Y^n|X^n}(dy^n|x^n)={\overrightarrow P}_{Y^n|X^n}(dy^n|x^n)$-a.s.;

\item[(2)] For each $i=0,1,\ldots, n-1$,  $Y_i \leftrightarrow (X^i, Y^{i-1}) \leftrightarrow (X_{i+1}, X_{i+2}, \ldots, X_n)$, forms a MC;

\item[(3)] For each  $i=0,1,\ldots, n-1$, $Y^i \leftrightarrow X^i \leftrightarrow X_{i+1}$ forms a MC;

\item[(4)] For each $i=0,1,\ldots, n-1$, $X_{i+1}^n\leftrightarrow{X^i}\leftrightarrow{Y^i}$, forms a MC. 
\end{enumerate}
\end{lemma}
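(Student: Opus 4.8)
The plan is to prove the four equivalences by interpreting each statement as a claim about which variables the causal factors of $P_{Y^n|X^n}$ are permitted to depend on, and then to move between them using the chain rule for disintegrations together with the standard stability properties of conditional independence (decomposition, weak union, contraction). Since every ${\cal X}_n,{\cal Y}_n$ is Polish, regular conditional distributions exist and are a.s. unique, so all of these identities are to be read as $\mu_{0,n}$-almost sure equalities of kernels. Concretely, I would establish $(1)\Leftrightarrow(2)$, then $(1)\Leftrightarrow(4)$, and finally $(3)\Leftrightarrow(4)$, which together close the equivalence.

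First I would settle $(1)\Leftrightarrow(2)$. By the chain rule the true joint always disintegrates as $P_{Y^n|X^n}(dy^n|x^n)=\prod_{i=0}^n P_{Y_i|Y^{i-1},X^n}(dy_i|y^{i-1},x^n)$, whereas by construction ${\overrightarrow P}_{Y^n|X^n}(dy^n|x^n)=\prod_{i=0}^n P_{Y_i|Y^{i-1},X^i}(dy_i|y^{i-1},x^i)$, the $i$-th factor depending on $x^i$ only. By uniqueness of the disintegration and induction on $i$, the two products agree a.s. iff their $i$-th factors agree, i.e. $P_{Y_i|Y^{i-1},X^n}=P_{Y_i|Y^{i-1},X^i}$ a.s. for every $i$, which is exactly the chain $Y_i\leftrightarrow(X^i,Y^{i-1})\leftrightarrow X_{i+1}^n$ of $(2)$.

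Next I would connect $(1)$ with $(4)$. For $(1)\Rightarrow(4)$ I would marginalize the causal factorization over $y_{i+1},\ldots,y_n$: integrating from $j=n$ down to $j=i+1$, each factor $P_{Y_j|Y^{j-1},X^j}$ integrates to one in $y_j$ while the surviving factors depend only on $y^{j-1}$, yielding $P_{Y^i|X^n}(dy^i|x^n)=\prod_{j=0}^i P_{Y_j|Y^{j-1},X^j}$, whose right-hand side depends on $x^n$ only through $x^i$; this measurability is precisely $X_{i+1}^n\leftrightarrow X^i\leftrightarrow Y^i$, i.e. $(4)$. Conversely, given $(4)$, writing $Y^i=(Y^{i-1},Y_i)$ in $X_{i+1}^n\perp(Y^{i-1},Y_i)\mid X^i$ and applying weak union gives $Y_i\perp X_{i+1}^n\mid(X^i,Y^{i-1})$, which is $(2)$ and hence $(1)$. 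The passage between $(4)$ and $(3)$ is then the crux: $(4)\Rightarrow(3)$ is immediate by decomposition (discard $X_{i+2}^n$ from the independent block), while $(3)\Rightarrow(4)$ requires a finite induction.

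I expect $(3)\Rightarrow(4)$ to be the main obstacle, since it upgrades a one-step Markov property to a block Markov property. The idea is to telescope. Condition $(3)$ at index $j$, after dropping $Y_{i+1},\ldots,Y_j$ by decomposition, reads $Y^i\perp X_{j+1}\mid X^j$ for every $j\ge i$. Starting from $Y^i\perp X_{i+1}\mid X^i$ and combining it with $Y^i\perp X_{i+2}\mid X^{i+1}$ through the contraction property yields $Y^i\perp(X_{i+1},X_{i+2})\mid X^i$; iterating this contraction up to $j=n-1$ produces $Y^i\perp X_{i+1}^n\mid X^i$, which is $(4)$. The one delicate point throughout is that each such graphoid manipulation is an almost-sure identity between regular conditional distributions, so I would phrase every step in that form and invoke the Polish-space disintegration theorem to guarantee the relevant kernels exist and compose a.s.; with that bookkeeping in place the chain $(1)\Leftrightarrow(2)\Leftrightarrow(4)\Leftrightarrow(3)$ is complete.
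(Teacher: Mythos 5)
Your proof is correct, but there is nothing in the paper to compare it against: the paper dismisses this lemma with the single line ``The derivation is straightforward'' and supplies no argument at all. Your proposal actually fills that gap. The three building blocks all check out. The equivalence $(1)\Leftrightarrow(2)$ via matching the causal factorization $\prod_{i}P_{Y_i|Y^{i-1},X^i}$ against the chain-rule factorization $\prod_i P_{Y_i|Y^{i-1},X^n}$, factor by factor, is exactly the right way to read condition (1), and the a.s.\ uniqueness of disintegrations on Polish spaces justifies the factor-by-factor identification. The marginalization argument for $(1)\Rightarrow(4)$ (integrate out $y_n,\ldots,y_{i+1}$ so that the surviving product depends on $x^n$ only through $x^i$) and the weak-union step for $(4)\Rightarrow(2)$ are both sound. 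The step you correctly flag as the crux, $(3)\Rightarrow(4)$, is handled by decomposition followed by iterated contraction: from $Y^i\perp X_{j+1}\mid X^j$ for all $j\ge i$, contraction telescopes one coordinate at a time into $Y^i\perp X_{i+1}^n\mid X^i$. One point worth making explicit, since the paper works on abstract Polish alphabets with no density or positivity assumptions: decomposition, weak union, and contraction are semi-graphoid axioms valid for \emph{arbitrary} probability measures (unlike the intersection property, which you never invoke), so your manipulations are legitimate in this generality. Your bookkeeping remark about phrasing each step as an a.s.\ identity of regular conditional kernels is the right level of care; the argument is complete as proposed.
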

\begin{proof}
The derivation is straightforward.
\end{proof}
\noi By utilizing Lemma~\ref{equivalent_statements}, then $\overrightarrow{R}_{0,n}(D)=R_{0,n}^{na}(D)$. Then the extremum of the nonanticipatory $\epsilon$-entropy (\ref{equation15}) is equivalent to the extremum of $\overrightarrow{R}_{0,n}(D)$ given by (\ref{ex12}).
\subsection{ Sequential RDF and Coding Theorems}
\noi Next, we establish a coding theorem for $\overrightarrow{R}_{0,n}(D)$ using the information sequential RDF introduced by Tatikonda in \cite{tatikonda2000}, which utilizes a similar formulation to the nonanticipatory $\epsilon$-entropy. The coding theorem is derived by considering a two dimensional source $X^{n,s}\tri\{X_{i,j}:i=0,\ldots,n,j=0,\ldots,s\}\in\otimes_{i=0}^n\otimes_{j=0}^s{\cal X}_{i,j}$, where $i$ represents time index and $j$ represents spatial index. 
The coding theorem is based on the following definitions.
\begin{definition}
 A sequential quantizer is a sequence of measurable functions $f^n=\{f_i:~i=0,1,\ldots,n\}$ defined by $f_i:{\cal X}^{i,s}\times{\cal Y}^{i-1,s}\mapsto{\cal Y}_i^s$, ${\cal Y}_i^s=f_i(x^{i,s},y^{i-1,s})$,~$i=1,\ldots,n$. The set of all such quantizers is denoted by ${\cal F}^{n,s}$.
\end{definition}
\begin{definition}
Let $Q_{0,n,s}^{SRD,o}(D)$ denote the fidelity set
\begin{align}
Q_{0,n,s}^{SRD,o}\tri\Big\{f^n\in{\cal F}^{n,s}:~\frac{1}{n+1}\sum_{i=0}^n\mathbb{E}_{P_{X^{i,s}}}\rho_s(X_i^{s},Y_i^{s})\leq{D}\Big\}\nonumber
\end{align}
where $\rho_s: {\cal X}_i^{s}\times{\cal Y}_i^{s}\mapsto[0,\infty):~i=0,1,\ldots,n$ is measurable. The operational sequential RDF is defined by
\begin{align}
R^{SRD,o}_{0,n,s}(D)\tri\inf_{f^n\in{Q}_{0,n,s}^{SRD,o}(D)}\frac{1}{(n+1)(s+1)}H(Y_0^s,\ldots,Y_n^s)\nonumber
\end{align}
and the operational sequential RDF  rate is defined by
\begin{align}
R^{SRD,o}(D)\tri\lim_{s\rightarrow\infty}R^{SRD,o}_{0,n,s}(D).\nonumber
\end{align}
\end{definition}
\noi The information sequential RDF for which a coding theorem is derived in \cite{tatikonda2000} is the following. Given the two dimensional source $P_{X^{n,s}}(dx^{n,s})$, a reproduction distribution $P_{Y^{n,s}|X^{n,s}}(dy^{n,s}|x^{n,s})$, and a fidelity set
\begin{align}
Q_{0,n,s}^{SRD}(D)&\tri\Big\{P_{Y^{n,s}|X^{n,s}}(dy^{n,s}|x^{n,s}):\nonumber\\
&~\frac{1}{n+1}\sum_{i=0}^n\mathbb{E}_{P_{X^s_i,Y^s_i}}\rho_s(X_i^s,Y_i^s)\leq{D}\Big\}\label{equation20}
\end{align}
the information sequential RDF is defined by 
\begin{align}
R_{0,n,s}^{SRD}(D)=&\inf_{\substack{P_{Y^{n,s}|X^{n,s}}\in{Q}_{0,n,s}^{SRD}(D)\\ (X_{i+1}^s,\ldots,X_n^s)\leftrightarrow(X^{i,s},Y^{i-1,s})\leftrightarrow{Y_i^s},i=0,1,\ldots,n}}\Big\{\nonumber\\
&\frac{1}{(n+1)(s+1)}I(X^{n,s};Y^{n,s})\Big\}.\label{equation21}
\end{align}
\noi The sequential source coding theorem  is the following. 
\begin{theorem}(Sequential Source Coding Theorem \cite{tatikonda2000})\label{sequential_coding_theorem}
Suppose $\{X_{i,j}:~i=0,1,\ldots,n, j=0,1,\ldots,s\}$ are finite alphabets, $P_{X^{n,s}}(dx^{n,s})=\otimes_{j=0}^s{P}(dx_j^n)$, $\{X_j^n:~j=0,1,\ldots,s\}$ identically distributed, and there exists an $x_0$ and $D_{max}>0$ such that $\mathbb{E}_{P_{X_{i,j}}}\rho_s(X_{i,j},x_0)<D_{max}$, for all $i=0,1,\ldots,n$, $j=0,1,\ldots,s$. Then for any $\epsilon>0$ and finite $n\in\mathbb{N}$, there exists $s(\epsilon,n)$ such that for all $s\geq{s}(\epsilon,n)$ we have 
\begin{align}
R_{0,n,s}^{SRD,o}(D+\epsilon)\leq{R}_{0,n}^{SRD}(D)+\epsilon\nonumber
\end{align}
where
\begin{align}
R_{0,n}^{SRD}(D)\tri\inf_{\substack{P_{Y^n|X^n}:\frac{1}{n+1}\mathbb{E}_{P_{X^n,Y^n}}\{\sum_{i=0}^n{\rho}(X_i,Y_i)\leq{D}\}\\X_{i+1}^n\leftrightarrow(X^i,Y^{i-1})\leftrightarrow{Y_i}:~i=0,1,\ldots,n}}\frac{I(X^n;Y^n)}{n+1}.\label{equation24}
\end{align}
\end{theorem}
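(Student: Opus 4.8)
The plan is to establish an achievability (direct coding) result in which the spatial index $j=0,1,\ldots,s$ plays the role of the coding blocklength while the time horizon $n$ stays fixed; the hypothesis $P_{X^{n,s}}=\otimes_{j=0}^s P(dx_j^n)$ guarantees that the spatial slices $\{X_j^n\}$ are i.i.d., which is exactly what supplies the law of large numbers and typicality needed for a random coding argument even though $n$ is finite.

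Next I would fix a single-letter test channel $P_{Y^n|X^n}$ that achieves $R_{0,n}^{SRD}(D)$ to within $\epsilon/2$ and that satisfies the causality constraint $X_{i+1}^n\leftrightarrow(X^i,Y^{i-1})\leftrightarrow Y_i$. By this constraint it factorizes as $\prod_{i=0}^n P_{Y_i|X^i,Y^{i-1}}$, and by the equivalence in Lemma~\ref{equivalent_statements} the associated directed information coincides with the ordinary mutual information, $\sum_{i=0}^n I(X^i;Y_i|Y^{i-1})=I(X^n\rightarrow Y^n)=I(X^n;Y^n)$. Extending this channel to act independently across the $s+1$ spatial samples preserves both the per-slice distortion and the per-slice rate, so it is this product channel that I will realize by coding.

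The core is a stagewise random coding argument carried out in the spatial direction. For each time $i$, conditioned on the already-produced past reproductions $y^{i-1,s}$ and the current source data $x^{i,s}$, I would generate a random codebook of roughly $2^{(s+1)(I(X^i;Y_i|Y^{i-1})+\epsilon')}$ reproduction blocks drawn i.i.d.\ across the spatial samples from the induced conditional marginal $P_{Y_i|Y^{i-1}}$, and let $f_i$ select a codeword that is jointly typical with $(x^{i,s},y^{i-1,s})$ under $P_{Y_i|X^i,Y^{i-1}}$. A conditional covering lemma together with the i.i.d.\ spatial structure shows that for $s$ large such a codeword exists with high probability and yields per-slice distortion at most $D+\epsilon$ on average; the boundedness hypothesis $\mathbb{E}\rho_s(X_{i,j},x_0)<D_{max}$ lets me absorb the rare covering-failure event by reverting to the fixed symbol $x_0$ while keeping its distortion contribution vanishingly small. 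Since only the causally available data enter $f_i$, the resulting $f^n$ is a legitimate sequential quantizer in ${\cal F}^{n,s}$.

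Finally I would bound the normalized output entropy. Describing the reproductions sequentially gives $H(Y_0^s,\ldots,Y_n^s)\le\sum_{i=0}^n H(Y_i^s|Y^{i-1,s})\le\sum_{i=0}^n\log|{\cal C}_i|$, where ${\cal C}_i$ denotes the stage-$i$ codebook, so after normalization by $(n+1)(s+1)$ the rate is at most $\frac{1}{n+1}\sum_{i=0}^n I(X^i;Y_i|Y^{i-1})+\epsilon'=\frac{1}{n+1}I(X^n;Y^n)+\epsilon'$. Choosing $\epsilon'\le\epsilon/2$, this is at most $R_{0,n}^{SRD}(D)+\epsilon$, and combining with the distortion estimate yields the claimed inequality $R_{0,n,s}^{SRD,o}(D+\epsilon)\le R_{0,n}^{SRD}(D)+\epsilon$ for all $s\ge s(\epsilon,n)$. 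I expect the main obstacle to be the stagewise conditional covering step: one must cover $Y_i^s$ conditioned on past reproductions that are themselves random codebook outputs, control the accumulation of the vanishing covering-failure probabilities and distortion-excess terms uniformly across the $n+1$ time stages, and confirm that the per-stage rates telescope into the directed information $I(X^n\rightarrow Y^n)=I(X^n;Y^n)$ while the sequential coding construction keeps the causality (Markov) constraint intact.
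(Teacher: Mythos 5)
The paper does not prove this theorem at all: it is recalled verbatim as Tatikonda's result, with the proof deferred entirely to the citation \cite{tatikonda2000}. Your proposal reconstructs essentially the same achievability argument used in that reference---the spatial index $j$ as the effective blocklength with $n$ fixed, causal factorization of a near-optimal test channel $\prod_{i=0}^{n}P_{Y_i|X^i,Y^{i-1}}$, stagewise conditional covering exploiting the i.i.d.\ spatial structure, the $D_{max}$ bound to absorb covering failures, and the identification $\sum_{i=0}^{n}I(X^i;Y_i|Y^{i-1})=I(X^n;Y^n)$ under the Markov constraint---so your approach matches the source the paper relies on, with the stagewise Markov-lemma step you flag being exactly the technical core carried out there.
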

\noi Notice that $R_{0,n}^{SRD}(D)$ is precisely Gorbunov-Pinsker's nonanticipatory $\epsilon$-entropy $\frac{1}{n+1}R_{0,n}^{na}(D)$ \cite{gorbunov-pinsker}. Moreover, by Lemma~\ref{equivalent_statements} $R_{0,n}^{SRD}(D)\equiv\frac{1}{n+1}\overrightarrow{R}_{0,n}(D)$ given by (\ref{ex12}). However, the coding Theorem~\ref{sequential_coding_theorem} is valid for finite time index $n$, since its derivation is based on taking $s\rightarrow\infty$. Consequently, with respect to the information sequential RDF (\ref{equation24}), our objective is to find the expression of the nonstationary optimal reproduction distribution.

\section{Existence of Reproduction Conditional Distribution of Nonanticipative RDF}\label{existence}

\noi In this section, the existence of the minimizing $(n+1)$-fold convolution of conditional distributions in (\ref{ex12}) is established  by using the topology of weak convergence of probability measures on Polish spaces. In fact, our results are more general than what is envisioned by the assumptions in \cite{gorbunov-pinsker, tatikonda2000}, since we work with abstract Polish spaces and general distortion functions. First, we recall some properties from \cite{charalambous-stavrou2012}.
\begin{theorem}\label{convexity_properties}\cite{charalambous-stavrou2012}
Let $\{{\cal X}_n:~n\in\mathbb{N}\}$ and  $\{{\cal Y}_n:~n\in\mathbb{N}\}$ be Polish spaces. Then
\begin{description}
\item[(1)] The set ${\cal Q}^{\bf C1}({\cal Y}_{0,n};{\cal X}_{0,n})$ is convex.
\item[(2)] $\mathbb{I}_{X^n\rightarrow{Y^n}}(\mu_{0,n},\overrightarrow{Q}_{0,n})$ is a convex functional of $\overrightarrow{Q}_{0,n}\in{\cal Q}^{\bf C1}({\cal Y}_{0,n};{\cal X}_{0,n})$ for a fixed $\mu_{0,n}\in{\cal M}_1({\cal X}_{0,n})$.
\item[(3)] The set $\overrightarrow{\cal Q}_{0,n}(D)$ is convex.
\end{description}
\end{theorem}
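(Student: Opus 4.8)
The plan is to dispatch the three parts in order, exploiting throughout that for a fixed source $\mu_{0,n}$ the joint measure $\mu_{0,n}\otimes\overrightarrow{Q}_{0,n}$ depends \emph{affinely} on $\overrightarrow{Q}_{0,n}$. For part (1), I would fix $\overrightarrow{Q}^{1}_{0,n},\overrightarrow{Q}^{2}_{0,n}\in{\cal Q}^{\bf C1}({\cal Y}_{0,n};{\cal X}_{0,n})$ and $\lambda\in[0,1]$, and set $\overrightarrow{Q}^{\lambda}_{0,n}\tri\lambda\overrightarrow{Q}^{1}_{0,n}+(1-\lambda)\overrightarrow{Q}^{2}_{0,n}$. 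Two checks suffice: that $\overrightarrow{Q}^{\lambda}_{0,n}(\cdot|x^n)$ is a probability measure for each $x^n$, which is immediate since a convex combination of probability measures is nonnegative, countably additive, and of total mass $\lambda+(1-\lambda)=1$; and that consistency condition {\bf C1} survives, which holds because for any $D\in{\cal B}({\cal Y}_{0,n})$ the map $x^n\mapsto\overrightarrow{Q}^{\lambda}_{0,n}(D|x^n)=\lambda\overrightarrow{Q}^{1}_{0,n}(D|x^n)+(1-\lambda)\overrightarrow{Q}^{2}_{0,n}(D|x^n)$ is a linear combination of two ${\cal B}({\cal X}_{0,n})$-measurable functions, hence ${\cal B}({\cal X}_{0,n})$-measurable. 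Thus $\overrightarrow{Q}^{\lambda}_{0,n}\in{\cal Q}^{\bf C1}$ and the set is convex.

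For part (2), the central observation is that \emph{both} measures entering the relative entropy in (\ref{equation33}) are affine in $\overrightarrow{Q}_{0,n}$. Since $\overrightarrow{Q}_{0,n}\mapsto\mu_{0,n}\otimes\overrightarrow{Q}_{0,n}$ is affine and marginalization over ${\cal X}^{\mathbb{N}}$ is linear, one obtains $\nu^{\lambda}_{0,n}=\lambda\nu^{1}_{0,n}+(1-\lambda)\nu^{2}_{0,n}$ and hence the reference product measure $\overrightarrow{\Pi}^{\lambda}_{0,n}=\mu_{0,n}\otimes\nu^{\lambda}_{0,n}=\lambda\overrightarrow{\Pi}^{1}_{0,n}+(1-\lambda)\overrightarrow{\Pi}^{2}_{0,n}$ is also affine. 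Invoking the joint convexity of $\mathbb{D}(\cdot||\cdot)$ in its two arguments (valid on Polish spaces, e.g.\ via the Donsker--Varadhan variational representation, or directly from convexity of $(t,s)\mapsto t\log(t/s)$) then gives
\begin{align}
\mathbb{I}_{X^n\rightarrow{Y^n}}(\mu_{0,n},\overrightarrow{Q}^{\lambda}_{0,n})&\leq\lambda\,\mathbb{I}_{X^n\rightarrow{Y^n}}(\mu_{0,n},\overrightarrow{Q}^{1}_{0,n})\nonumber\\
&\quad+(1-\lambda)\,\mathbb{I}_{X^n\rightarrow{Y^n}}(\mu_{0,n},\overrightarrow{Q}^{2}_{0,n}),\nonumber
\end{align}
which is the claimed convexity.

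For part (3), the same affineness makes the distortion functional $\ell_{d_{0,n}}$ of (\ref{eq2}) affine, so $\ell_{d_{0,n}}(\overrightarrow{Q}^{\lambda}_{0,n})=\lambda\ell_{d_{0,n}}(\overrightarrow{Q}^{1}_{0,n})+(1-\lambda)\ell_{d_{0,n}}(\overrightarrow{Q}^{2}_{0,n})\leq\lambda D+(1-\lambda)D=D$ whenever $\overrightarrow{Q}^{1}_{0,n},\overrightarrow{Q}^{2}_{0,n}\in\overrightarrow{\cal Q}_{0,n}(D)$; together with part (1), which places $\overrightarrow{Q}^{\lambda}_{0,n}$ in ${\cal Q}^{\bf C1}$, this yields $\overrightarrow{Q}^{\lambda}_{0,n}\in\overrightarrow{\cal Q}_{0,n}(D)$. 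Parts (1) and (3) are thus essentially bookkeeping around linearity; the substantive step is (2), and the main obstacle there is to justify rigorously, in the abstract measure-theoretic setting, both the joint convexity of relative entropy and that marginalization and the product construction commute with convex combinations at the level of the measures themselves, not merely on the generating cylinder sets.
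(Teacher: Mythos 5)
Your proposal is correct, but note that the paper itself states this theorem without proof, citing it from \cite{charalambous-stavrou2012}, so there is no in-paper argument to compare against. Your route --- affine dependence of $\mu_{0,n}\otimes\overrightarrow{Q}_{0,n}$, of the marginal $\nu_{0,n}$, and of the product $\overrightarrow{\Pi}_{0,n}$ on $\overrightarrow{Q}_{0,n}$, combined with joint convexity of relative entropy (e.g.\ via Donsker--Varadhan or the convexity of $(t,s)\mapsto t\log(t/s)$), with parts (1) and (3) reduced to linearity of measurability and of the distortion functional --- is the standard argument and matches what the cited reference does; the concerns you raise at the end (joint convexity of $\mathbb{D}(\cdot||\cdot)$ on Polish spaces, and passing from rectangles to the full product $\sigma$-algebra) are both settled by standard facts (the variational representation, and uniqueness of finite measures agreeing on a generating $\pi$-system, respectively), so they are not genuine gaps.
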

\noi Let $BC({\cal Y}_{0,n})$ denotes the set of bounded continuous real-valued functions on ${\cal Y}_{0,n}$. Below, we introduce the main conditions for the existence of nonanticipative RDF (\ref{ex12}).   
\begin{assumption}\label{conditions-existence}
The following are assumed.
\begin{description}
\item[(A1)] ${\cal Y}_{0,n}$ is a compact Polish space, ${\cal X}_{0,n}$ is a Polish space;
\item[(A2)] For all $h(\cdot){\in}BC({\cal Y}_{0,n})$, the function mapping $(x^{n},y^{n-1})\in{\cal X}_{0,n}\times{\cal Y}_{0,n-1}\mapsto\int_{{\cal Y}_n}h(y)q_n(dy;y^{n-1},x^n)\in\mathbb{R}$ 
is continuous jointly in the variables $(x^{n},y^{n-1})\in{\cal X}_{0,n}\times{\cal Y}_{0,n-1}$;
\item[(A3)] $d_{0,n}(x^n,\cdot):{\cal Y}_{0,n}\mapsto[0,\infty)$ is continuous on ${\cal Y}_{0,n}$, uniformly in $x^n\in{\cal X}_{0,n}$;
\item[(A4)] There exist sequence $(x^n,y^{n})\in{\cal X}_{0,n}\times{\cal Y}_{0,n}$ satisfying $d_{0,n}(x^n,y^{n})<D$.
\end{description}
\end{assumption}
\noi The following weak compactness result can be obtained, which will be used to establish existence.
\begin{lemma}\label{compactness2}\cite{stavrou-charalambous2013a}
Suppose Assumption~\ref{conditions-existence} (A1), (A2) hold. Then
\begin{description}
\item[(1)] The set ${\cal Q}^{\bf C1}({\cal Y}_{0,n};{\cal X}_{0,n})$ is weakly compact.
\item[(2)] Under the additional conditions (A3), (A4)  the set $\overrightarrow{\cal{Q}}_{0,n}(D)$ is a closed subset of ${\cal Q}^{\bf C1}({\cal Y}_{0,n};{\cal X}_{0,n})$ (hence compact).
\end{description}
\end{lemma}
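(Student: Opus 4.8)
The plan is to prove both parts within the topology of weak convergence, associating to each admissible kernel $\overrightarrow{Q}_{0,n}$ the joint measure $\mu_{0,n}\otimes\overrightarrow{Q}_{0,n}$ on ${\cal X}_{0,n}\times{\cal Y}_{0,n}$ for the fixed source $\mu_{0,n}$, and declaring a net $\overrightarrow{Q}^\alpha_{0,n}$ convergent when the joint measures converge weakly, equivalently when $\int g(x^n,y^n)(\mu_{0,n}\otimes\overrightarrow{Q}^\alpha_{0,n})(dx^n,dy^n)\to\int g(x^n,y^n)(\mu_{0,n}\otimes\overrightarrow{Q}_{0,n})(dx^n,dy^n)$ for every $g\in BC({\cal X}_{0,n}\times{\cal Y}_{0,n})$.

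For Part (1), I would first establish relative compactness of the ambient family of kernels. Since ${\cal Y}_{0,n}$ is a compact Polish space by (A1), the space ${\cal M}_1({\cal Y}_{0,n})$ is weakly compact and metrizable by Prohorov's theorem, and the joint measures $\{\mu_{0,n}\otimes\overrightarrow{Q}_{0,n}\}$ form a uniformly tight family: their ${\cal Y}$-marginals live in the compact ${\cal Y}_{0,n}$, while their ${\cal X}$-marginal is the single, hence tight, measure $\mu_{0,n}$, so for compact $K\subset{\cal X}_{0,n}$ with $\mu_{0,n}(K^c)<\epsilon$ the set $K\times{\cal Y}_{0,n}$ controls all the joint measures simultaneously. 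Prohorov then yields a weakly convergent subnet; its ${\cal X}$-marginal is again $\mu_{0,n}$ by continuity of marginals, and disintegration against $\mu_{0,n}$ recovers a limiting stochastic kernel, so the full set ${\cal Q}({\cal Y}_{0,n};{\cal X}_{0,n})$ is weakly compact. It then suffices to show that ${\cal Q}^{\bf C1}({\cal Y}_{0,n};{\cal X}_{0,n})$ is a closed subset, since a closed subset of a compact set is compact.

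Closedness under the causal constraint {\bf C1} is the crux and is where (A2) enters. Given a convergent net $\overrightarrow{Q}^\alpha_{0,n}\to\overrightarrow{Q}_{0,n}$ with every $\overrightarrow{Q}^\alpha_{0,n}\in{\cal Q}^{\bf C1}$, I would show the limit retains the nonanticipative factorization into per-stage kernels $q_i(dy_i;y^{i-1},x^i)$. The idea is to test {\bf C1} against product integrands $\prod_{i=0}^n h_i(y_i)$ with $h_i\in BC({\cal Y}_i)$ and to peel off the last coordinate: by (A2) the map $(x^n,y^{n-1})\mapsto\int_{{\cal Y}_n}h_n(y_n)q^\alpha_n(dy_n;y^{n-1},x^n)$ is jointly continuous and bounded, so it is a legitimate test function under the weak limit, and the remaining integral over ${\cal Y}_{0,n-1}$ is treated by downward induction on the stage index. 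This propagates both the measurability required by {\bf C1} and the causal product structure to the limit $\overrightarrow{Q}_{0,n}$, establishing closedness and hence compactness.

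For Part (2), I would show the constraint functional $\overrightarrow{Q}_{0,n}\mapsto\ell_{d_{0,n}}(\overrightarrow{Q}_{0,n})$ is lower semicontinuous on the compact set ${\cal Q}^{\bf C1}$, so that the fidelity set, being the sublevel set $\{\ell_{d_{0,n}}\leq D\}$, is closed. Since $d_{0,n}(x^n,\cdot)$ is nonnegative and continuous on the compact ${\cal Y}_{0,n}$, uniformly in $x^n$ by (A3), it is a nonnegative lower semicontinuous integrand in the $y$-variable, and a portmanteau argument gives $\ell_{d_{0,n}}(\overrightarrow{Q}_{0,n})\leq\liminf_\alpha\ell_{d_{0,n}}(\overrightarrow{Q}^\alpha_{0,n})$ along the weak limit, the uniformity in $x^n$ justifying the interchange with the integration against the fixed $\mu_{0,n}$. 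Lower semicontinuity makes the sublevel set closed, (A4) guarantees it is nonempty for the given $D$, and being a closed subset of the compact ${\cal Q}^{\bf C1}$ it is compact. The main obstacle is the preservation of the causal structure {\bf C1} under weak limits in Part (1): the measure-theoretic compactness is the standard Prohorov argument, but verifying that nonanticipation survives the limit is precisely what (A2) is designed to make routine.
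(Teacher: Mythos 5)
You should know at the outset that the paper itself contains no proof of this lemma --- it is imported verbatim from \cite{stavrou-charalambous2013a} --- so your attempt can only be judged on its own merits. Your skeleton is the right one: realize each kernel as the joint measure $\mu_{0,n}\otimes\overrightarrow{Q}_{0,n}$ with fixed first marginal, get compactness of the ambient set of all kernels from compactness of ${\cal Y}_{0,n}$ (your tightness bound via $K\times{\cal Y}_{0,n}$, the stability of the marginal under weak limits, and the disintegration step are all correct), then prove the constraint set is closed, and obtain part (2) from lower semicontinuity of $\ell_{d_{0,n}}$ plus nonemptiness from (A4).

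The genuine gap is at the step you yourself call the crux: closedness of {\bf C1} under weak limits. Your ``peeling'' argument tests the weak limit against the functions $(x^n,y^{n-1})\mapsto\int_{{\cal Y}_n}h_n(y_n)q^\alpha_n(dy_n;y^{n-1},x^n)$, but these depend on the net index $\alpha$; weak convergence licenses only \emph{fixed} test functions, and to use $\alpha$-dependent ones you would need uniform convergence of $\int h_n\,dq_n^\alpha$ to $\int h_n\,dq_n$, which is essentially what you are trying to prove --- the argument is circular. Assumption (A2) cannot repair this: it gives joint continuity of $\int h_n \, dq_n^\alpha$ for each fixed $\alpha$, not equicontinuity along the net, so no Arzel\`a--Ascoli-type compactness of these test functions is available, and your closing claim that (A2) makes the preservation of nonanticipation ``routine'' is a misattribution of its role. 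What actually closes the set is that, once the source marginal is pinned to $\mu_{0,n}$, {\bf C1} is a family of \emph{linear} constraints: $\overrightarrow{Q}_{0,n}$ satisfies {\bf C1} ($\mu_{0,n}$-a.e.) if and only if for every $i$, every $g\in BC({\cal Y}_{0,i})$ and every $\phi\in L^1(\mu_{0,n})$ one has $\int \phi\, g \,d(\mu_{0,n}\otimes\overrightarrow{Q}_{0,n})=\int \mathbb{E}_{\mu_{0,n}}\big[\phi\,\big|\,{\cal B}({\cal X}_{0,i})\big]\, g \,d(\mu_{0,n}\otimes\overrightarrow{Q}_{0,n})$, and each such identity survives the limit \emph{provided} the topology admits test functions that are merely integrable in $x^n$ and continuous in $y^n$, i.e., the weak$^*$ topology induced by the duality with $L^1(\mu_{0,n};C({\cal Y}_{0,n}))$; this coincides with your narrow topology on the fixed-marginal set because ${\cal Y}_{0,n}$ is compact, but the constraint cannot even be expressed with $BC({\cal X}_{0,n}\times{\cal Y}_{0,n})$ test functions, since $\mathbb{E}_{\mu_{0,n}}[\phi|{\cal B}({\cal X}_{0,i})]$ is not continuous. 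The same defect, in milder form, infects your part (2): (A3) gives continuity of $d_{0,n}(x^n,\cdot)$ in $y^n$ only, with no regularity in $x^n$, so $d_{0,n}$ need not be lower semicontinuous on the product space and the portmanteau bound you invoke does not apply; the fix is again Carath\'eodory test functions --- truncate to $d_{0,n}\wedge M$, which lies in $L^1(\mu_{0,n};C({\cal Y}_{0,n}))$, pass to the limit, and let $M\uparrow\infty$ by monotone convergence.
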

\noi The next theorem establishes existence of the minimizing reproduction distribution for (\ref{ex12}). First, we need the following Lemma. 
\begin{lemma}\cite{charalambous-stavrou2012}\label{lower_semicontinuity}
Under Assumptions~\ref{conditions-existence} (A1), (A2), $\mathbb{I}_{X^n\rightarrow{Y^n}}(\mu_{0,n},\overrightarrow{Q}_{0,n})$ is lower semicontinuous on $\overrightarrow{Q}_{0,n}\in{\cal Q}^{\bf C1}({\cal Y}_{0,n};{\cal X}_{0,n})$ for a fixed $\mu_{0,n}\in{\cal M}_1({\cal X}_{0,n})$.
\end{lemma}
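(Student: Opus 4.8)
The plan is to exploit the fact that the functional in question is a relative entropy, and that relative entropy is \emph{jointly} lower semicontinuous in its two arguments under the topology of weak convergence. Writing $\mathbb{I}_{X^n\rightarrow{Y^n}}(\mu_{0,n},\overrightarrow{Q}_{0,n})=\mathbb{D}(\mu_{0,n}\otimes\overrightarrow{Q}_{0,n}||\overrightarrow{\Pi}_{0,n})$ as in (\ref{equation33}), the crucial observation is that \emph{both} measures inside the divergence depend on the minimizing variable $\overrightarrow{Q}_{0,n}$: the numerator is the joint law $\mu_{0,n}\otimes\overrightarrow{Q}_{0,n}$, while the reference $\overrightarrow{\Pi}_{0,n}=\mu_{0,n}\otimes\nu_{0,n}$ carries the $\mathcal{Y}$-marginal $\nu_{0,n}$ induced by $\overrightarrow{Q}_{0,n}$. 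Hence one cannot argue via lower semicontinuity in a single slot with a frozen reference measure; joint lower semicontinuity is genuinely required.

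First I would fix a sequence $\{\overrightarrow{Q}^{(k)}_{0,n}\}$ converging weakly to $\overrightarrow{Q}_{0,n}$ in $\mathcal{Q}^{\mathbf{C1}}(\mathcal{Y}_{0,n};\mathcal{X}_{0,n})$ and track the two induced families of measures. By the way the topology of weak convergence is placed on the kernels, which via Assumption~\ref{conditions-existence} (A2) is compatible with weak convergence of the joint laws, the measures $\mu_{0,n}\otimes\overrightarrow{Q}^{(k)}_{0,n}$ converge weakly to $\mu_{0,n}\otimes\overrightarrow{Q}_{0,n}$ on $\mathcal{X}_{0,n}\times\mathcal{Y}_{0,n}$; concretely, for $g\in BC(\mathcal{X}_{0,n}\times\mathcal{Y}_{0,n})$ one integrates against the one-step kernels from the innermost coordinate outward and invokes the joint continuity of (A2) at each stage to obtain a bounded continuous integrand against the fixed $\mu_{0,n}$, whose integral converges. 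Projecting onto $\mathcal{Y}_{0,n}$ (a weakly continuous operation) gives $\nu^{(k)}_{0,n}\Rightarrow\nu_{0,n}$, and since $\mu_{0,n}$ is held fixed the product references converge weakly, $\overrightarrow{\Pi}^{(k)}_{0,n}\Rightarrow\overrightarrow{\Pi}_{0,n}$.

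Next I would invoke the Gibbs (Donsker--Varadhan) variational representation of relative entropy,
\begin{align}
\mathbb{D}(P||\Pi)=\sup_{\phi\in BC(\mathcal{X}_{0,n}\times\mathcal{Y}_{0,n})}\Big\{\int\phi\, dP-\log\int e^{\phi}\, d\Pi\Big\},\nonumber
\end{align}
in which, for each fixed bounded continuous $\phi$, the map $(P,\Pi)\mapsto\int\phi\, dP-\log\int e^{\phi}\,d\Pi$ is continuous in the topology of weak convergence, since $\phi$ and $e^{\phi}$ are bounded continuous and $\int e^{\phi}\,d\Pi\geq e^{-\|\phi\|_\infty}>0$ keeps the logarithm continuous. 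A pointwise supremum of continuous functionals is lower semicontinuous, whence $(P,\Pi)\mapsto\mathbb{D}(P||\Pi)$ is jointly lower semicontinuous. Applying this with $P^{(k)}=\mu_{0,n}\otimes\overrightarrow{Q}^{(k)}_{0,n}$ and $\Pi^{(k)}=\overrightarrow{\Pi}^{(k)}_{0,n}$ and using the convergences established above yields
\begin{align}
\mathbb{D}(\mu_{0,n}\otimes\overrightarrow{Q}_{0,n}||\overrightarrow{\Pi}_{0,n})\leq\liminf_{k\rightarrow\infty}\mathbb{D}(\mu_{0,n}\otimes\overrightarrow{Q}^{(k)}_{0,n}||\overrightarrow{\Pi}^{(k)}_{0,n}),\nonumber
\end{align}
which is exactly the asserted lower semicontinuity of $\mathbb{I}_{X^n\rightarrow{Y^n}}(\mu_{0,n},\cdot)$.

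I expect the main obstacle to be the second step: verifying that weak convergence of the kernels propagates \emph{simultaneously} to the joint laws and to the product reference measures. The delicate point is the continuity of the marginalization and product-formation maps under the precise topology placed on $\mathcal{Q}^{\mathbf{C1}}$, and this is exactly where Assumption~\ref{conditions-existence} (A1), which makes the relevant integrals bounded and $BC(\mathcal{Y}_{0,n})$ rich enough, and (A2), the joint continuity of the conditional expectation operators, are indispensable. Once this compatibility is secured, the joint lower semicontinuity of relative entropy supplied by the variational representation completes the argument automatically.
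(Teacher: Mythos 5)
The paper contains no proof of Lemma~\ref{lower_semicontinuity} (it is imported verbatim from \cite{charalambous-stavrou2012}), but your argument is correct and is essentially the one used in that reference: write $\mathbb{I}_{X^n\rightarrow{Y^n}}(\mu_{0,n},\overrightarrow{Q}_{0,n})=\mathbb{D}(\mu_{0,n}\otimes\overrightarrow{Q}_{0,n}||\overrightarrow{\Pi}_{0,n})$, show that weak convergence of the kernels propagates (via (A1)--(A2)) simultaneously to the joint laws, the marginals $\nu_{0,n}$, and the product references, and then invoke joint lower semicontinuity of relative entropy via the Donsker--Varadhan variational representation. Your observation that the reference measure $\overrightarrow{\Pi}_{0,n}$ also moves with $\overrightarrow{Q}_{0,n}$, so that lower semicontinuity in a single slot with a frozen reference is insufficient, is precisely the crux of the matter.
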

\noi Next, we state the main Theorem.
\begin{theorem}(Existence \cite{stavrou-charalambous2013a})\label{existence_rd}
Suppose Assumption 1 hold. Then $\overrightarrow{R}_{0,n}(D)$ has a minimum.
\end{theorem}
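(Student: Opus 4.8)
The plan is to prove existence by the direct method of the calculus of variations, i.e.\ a generalized Weierstrass argument: minimize a lower semicontinuous functional over a (weakly) compact feasible set by extracting a convergent minimizing sequence and passing to the limit. Every analytic ingredient has already been assembled in the preceding lemmas, so the proof reduces to combining them correctly in the topology of weak convergence of probability measures on Polish spaces.

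First I would record the two structural facts. By Lemma~\ref{compactness2}, Assumption~\ref{conditions-existence} (A1)--(A4) imply that the fidelity set $\overrightarrow{\cal Q}_{0,n}(D)$ is a weakly compact, and in particular weakly closed, subset of the weakly compact set ${\cal Q}^{\bf C1}({\cal Y}_{0,n};{\cal X}_{0,n})$. By Lemma~\ref{lower_semicontinuity}, under (A1), (A2) the functional $\overrightarrow{Q}_{0,n}\mapsto\mathbb{I}_{X^n\rightarrow{Y^n}}(\mu_{0,n},\overrightarrow{Q}_{0,n})$ is lower semicontinuous with respect to weak convergence, for the fixed source $\mu_{0,n}\in{\cal M}_1({\cal X}_{0,n})$. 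Since ${\cal Y}_{0,n}$ is a compact Polish space, the associated weak topology is metrizable (Prohorov), so weak compactness coincides with weak sequential compactness and I may argue throughout with sequences rather than nets.

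Then I would run the minimizing-sequence argument. Set $m\tri\inf_{\overrightarrow{Q}_{0,n}\in\overrightarrow{\cal Q}_{0,n}(D)}\mathbb{I}_{X^n\rightarrow{Y^n}}(\mu_{0,n},\overrightarrow{Q}_{0,n})$ and pick $\{\overrightarrow{Q}_{0,n}^{(k)}\}\subset\overrightarrow{\cal Q}_{0,n}(D)$ with $\mathbb{I}_{X^n\rightarrow{Y^n}}(\mu_{0,n},\overrightarrow{Q}_{0,n}^{(k)})\to m$. By weak sequential compactness there is a subsequence converging weakly to some $\overrightarrow{Q}_{0,n}^{*}$, and because $\overrightarrow{\cal Q}_{0,n}(D)$ is weakly closed the limit again lies in the fidelity set. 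Lower semicontinuity then yields $\mathbb{I}_{X^n\rightarrow{Y^n}}(\mu_{0,n},\overrightarrow{Q}_{0,n}^{*})\le\liminf_{k}\mathbb{I}_{X^n\rightarrow{Y^n}}(\mu_{0,n},\overrightarrow{Q}_{0,n}^{(k)})=m$, while feasibility of $\overrightarrow{Q}_{0,n}^{*}$ forces $\mathbb{I}_{X^n\rightarrow{Y^n}}(\mu_{0,n},\overrightarrow{Q}_{0,n}^{*})\ge m$; hence equality holds and the infimum is attained at $\overrightarrow{Q}_{0,n}^{*}$.

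The point demanding separate care is nonemptiness: a minimum exists only if $\overrightarrow{\cal Q}_{0,n}(D)\neq\emptyset$. Here I would use (A4) to exhibit at least one admissible nonanticipative kernel, for instance a (possibly deterministic) reproduction whose per-stage output depends only on the currently available source symbols and whose average distortion does not exceed $D$, and verify that it satisfies consistency condition {\bf C1} together with the measurability requirements {\bf iv)}, {\bf v)}. I expect this feasibility construction to be the main obstacle, because one must simultaneously respect the nonanticipation (causality) structure encoded in {\bf C1} and the stated measurability, whereas (A4) supplies only a single favorable point $(x^n,y^n)$ rather than a kernel meeting the \emph{average} constraint over all of ${\cal X}_{0,n}$; once nonemptiness is secured, the compactness and lower-semicontinuity steps are routine consequences of Lemmas~\ref{compactness2} and~\ref{lower_semicontinuity}.
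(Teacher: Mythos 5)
Your proposal is correct and follows exactly the route the paper intends: the theorem is stated immediately after Lemma~\ref{compactness2} (weak compactness of $\overrightarrow{\cal Q}_{0,n}(D)$ under (A1)--(A4)) and Lemma~\ref{lower_semicontinuity} (weak lower semicontinuity of $\mathbb{I}_{X^n\rightarrow Y^n}(\mu_{0,n},\cdot)$ under (A1)--(A2)), which the paper explicitly introduces as the ingredients for existence, so your Weierstrass/minimizing-sequence argument is the same proof. Your closing observation about nonemptiness of the fidelity set is a fair and careful remark --- (A4) as literally stated supplies only one favorable pair $(x^n,y^n)$, not an admissible kernel meeting the average constraint --- but this is a gap in the assumption's phrasing rather than in your argument, and it does not change the fact that your approach coincides with the paper's.
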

\noi Thus, Theorem~\ref{existence_rd} implies the following results. By {\bf GP1}, for a stationary source $R^{na}(D)=R^{na,+}(D)$ is defined and it is finite, by {\bf GP2}, for a stationary consistent source $\lim_{k\rightarrow\infty}\frac{1}{k+1}R^{na}_{0,k}(D)=R^{na}(D)<\infty$, and by {\bf GP3}, for a stationary source and single letter distortion, $R^{na}(D)=R^{na,+}(D)$ and the infimum in (\ref{equation23}) is realized by stationary source-reproduction pairs.

\section{Optimal Reproduction of Nonanticipative RDF}\label{optimal_reconstruction}
\noi In this section, we derive the expression of reproduction conditional distribution which achieves the infimum of $\overrightarrow{R}_{0,n}(D)$ or $R_{0,n}^{SRD}(D)$. We assume a distortion function of the form $d_{0,n}(x^n,y^n)\tri\sum_{i=0}^n\rho_{0,i}(x^i,y^i)$. We shall need, the Gateaux differential of $\mathbb{I}_{X^n\rightarrow{Y^n}}(\mu_{0,n},{\overrightarrow Q}_{0,n})$ in every direction of $\{q_i(dy_i;y^{i-1},x^i):i=0,1,\ldots,n\}$ (due to nonstationarity).
\begin{theorem}(Gateaux Derivative) \label{th5}
Let ${\mathbb I}_{{\mu}_{0,n}}(q_i : i=0,1,\ldots,n) \tri {\mathbb I}_{X^n\rightarrow{Y^n}}({\mu}_{0,n},\overrightarrow{Q}_{0,n})$ be well defined for every $\overrightarrow{Q}_{0,n}\in{\cal Q}^{\bf C1}({\cal Y}_{0,n};{\cal X}_{0,n})$. Then  $\{q_{i}(\cdot;\cdot,\cdot): i=0,1,\ldots,n\}  \rightarrow {\mathbb I}_{{\mu}_{0,n}}(q_{i}(\cdot;\cdot,\cdot): i=0,1,\ldots,n)$ is Gateaux differentiable at every point in ${\cal Q}({\cal Y}_{i};{\cal Y}_{0,i-1}\times{\cal X}_{0,i})$,  and the Gateaux derivative at the points ${q}_{i}^*(\cdot;\cdot,\cdot)$ in each direction $\delta{q_{i}}=q_{i}-{q}_{i}^*$,~$i=0,\ldots,n$, is 
\begin{align}
&\delta{\mathbb I}_{{\mu}_{0,n}}({q}_{i}^*,{q}_{i}-{q}_{i}^*: i=0,\ldots,n)\nonumber\\
&=\sum_{i=0}^n\int\log \bigg(\frac{{q}_{i}^*(dy_i;y^{i-1},x^i)}{{\nu}^*_{i}(dy_i;y^{i-1})}\bigg)\frac{d}{d\epsilon}\overrightarrow{Q}^{\epsilon}_{0,i}(dy^i|x^i)\Big{|}_{\epsilon=0}{\mu}_{0,i}(dx^i)\nonumber
\end{align}
where $\overrightarrow{Q}_{0,i}^\epsilon \tri \otimes_{j=0}^i q_{j}^\epsilon(dy_j;y^{j-1},x^j)$,  $q_{j}^\epsilon= q_{j}^*+ \epsilon \Big(q_{j}-q_{j}^*\Big)$, $j=0,1,\ldots, i,~~i=0,1,\ldots, n$,
\begin{align*}
&\frac{d}{d\epsilon}\overrightarrow{Q}_{0}^{\epsilon}(dy_0;x_0)\Big{|}_{\epsilon=0}=\delta{q}_{0}(dy_0;x_0)\nonumber\\
&\frac{d}{d\epsilon}\overrightarrow{Q}_{0,1}^{\epsilon}(dy^1|x^1)\Big{|}_{\epsilon=0}=\delta{q}_{0}(dy_0;x_0)\otimes{q}^*_{1}(dy_1;y_0,x^1)\nonumber\\
&\qquad\qquad\qquad\qquad\quad+{q}^*_{0}(dy_0;x_0)\otimes\delta{q}_{1}(dy_1;y_0,x^1)\nonumber
\end{align*}
\qquad\vdots
\begin{align*}
&\frac{d}{d\epsilon}\overrightarrow{Q}^{\epsilon}_{0,i}(dy^i|x^i)\Big{|}_{\epsilon=0}=\delta{q}_0(dy_0;x^0)\otimes_{j=1}^i{q}^*_j(dy_j;y^{j-1},x^j)\nonumber\\
&+{q}^*_0(dy_0;x_0)\delta{q}_{1}(dy_1|y_0,x^1)\otimes_{j=2}^i{q}^*_j(dy_j;y^{j-1},x^j)\\
&\ldots+\otimes_{j=0}^{i-1}{q}^*_j(dy_j;y^{j-1},x^j) \otimes \delta{q}_i(dy_i;y^{i-1},x^i).
\end{align*}
\end{theorem}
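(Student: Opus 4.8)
The plan is to differentiate the functional in (\ref{equation203}) along the affine path $q_j^\epsilon\tri q_j^*+\epsilon(q_j-q_j^*)$, $j=0,\ldots,n$, and to exploit the product (chain-rule) structure of $\overrightarrow{Q}_{0,n}$ together with the cancellation that is characteristic of a relative-entropy functional whose second argument is the marginal induced by the first. Writing $\mathbb{I}(\epsilon)\tri\mathbb{I}_{\mu_{0,n}}(q_i^\epsilon:i=0,\ldots,n)$, I would first note that by Theorem~\ref{convexity_properties} the map $\overrightarrow{Q}_{0,n}\mapsto\mathbb{I}_{X^n\rightarrow{Y^n}}(\mu_{0,n},\overrightarrow{Q}_{0,n})$ is convex while $\epsilon\mapsto\overrightarrow{Q}_{0,n}^\epsilon$ is affine; hence $\epsilon\mapsto\mathbb{I}(\epsilon)$ is convex on $[0,1]$, its one-sided difference quotients are monotone, and the right derivative at $\epsilon=0$ exists and equals the Gateaux derivative sought.

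To evaluate it I would split the functional (interpreting each logarithm as that of a density against a fixed reference measure), using that the $Y^n$-marginal of $\mu_{0,n}\otimes\overrightarrow{Q}_{0,n}^\epsilon$ is exactly $\nu_{0,n}^\epsilon$, as
\begin{align}
\mathbb{I}(\epsilon)=\int\log\overrightarrow{Q}_{0,n}^\epsilon(dy^n|x^n)(\mu_{0,n}\otimes\overrightarrow{Q}_{0,n}^\epsilon)(dx^n,dy^n)-\int\log\nu_{0,n}^\epsilon(dy^n)\,\nu_{0,n}^\epsilon(dy^n).\nonumber
\end{align}
Differentiating in $\epsilon$ produces, in each term, one contribution where the logarithm is differentiated and one where the measure is differentiated. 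The crucial point is that the logarithm-differentiated contributions collapse to $\int\frac{d}{d\epsilon}\overrightarrow{Q}_{0,n}^\epsilon$ and $\int\frac{d}{d\epsilon}\nu_{0,n}^\epsilon$, both of which vanish because $\overrightarrow{Q}_{0,n}^\epsilon$ and $\nu_{0,n}^\epsilon$ have total mass identically equal to $1$. After evaluating at $\epsilon=0$, what survives is
\begin{align}
\delta\mathbb{I}_{\mu_{0,n}}=\int\log\Big(\frac{\overrightarrow{Q}_{0,n}^*(dy^n|x^n)}{\nu_{0,n}^*(dy^n)}\Big)\frac{d}{d\epsilon}\overrightarrow{Q}_{0,n}^\epsilon(dy^n|x^n)\Big|_{\epsilon=0}\mu_{0,n}(dx^n).\nonumber
\end{align}

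It then remains to re-express this single integral as the stated sum over $i$. For this I would factor the path Radon--Nikodym derivative as $\log(\overrightarrow{Q}_{0,n}^*/\nu_{0,n}^*)=\sum_{i=0}^n\log\big(q_i^*(dy_i;y^{i-1},x^i)/\nu_i^*(dy_i;y^{i-1})\big)$, which holds because the nonanticipative structure makes $\nu_{0,n}^*$ factor into the conditional marginals $\nu_i^*$, and combine it with the Leibniz expansion $\frac{d}{d\epsilon}\overrightarrow{Q}_{0,n}^\epsilon\big|_{\epsilon=0}=\sum_{j=0}^n q_0^*\otimes\cdots\otimes\delta q_j\otimes\cdots\otimes q_n^*$. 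Pairing the $i$-th logarithmic term with the $j$-th summand and integrating out the coordinates of index exceeding $\max(i,j)$ (each later kernel $q^*$ integrating to $1$), two things happen: when $j>i$ the term vanishes, since the surviving factor $\delta q_j=q_j-q_j^*$ has zero mass while the logarithm does not depend on $y_j$; when $j\le i$ the later stages integrate away and the remaining $j$-sum reassembles $\frac{d}{d\epsilon}\overrightarrow{Q}_{0,i}^\epsilon|_{\epsilon=0}$ integrated against $\mu_{0,i}(dx^i)$. This produces precisely the claimed formula.

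The hard part will be the rigorous justification of differentiation under the integral sign on the abstract Polish spaces, where the logarithmic Radon--Nikodym densities need not be bounded. I would secure this using the convexity noted above, so that the difference quotients are monotone in $\epsilon$ and hence admit a limit; then bound them by an $\epsilon$-independent integrable envelope built from $\log(q_i^*/\nu_i^*)$ under the standing hypothesis that $\mathbb{I}_{\mu_{0,n}}$ is finite at the kernels in question; and finally pass to the limit by monotone (or dominated) convergence. Once this interchange is in place, the vanishing of the normalization derivatives and the Leibniz bookkeeping of the previous paragraph are routine, completing the derivation.
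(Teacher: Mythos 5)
The paper offers no proof of Theorem~\ref{th5} to compare against (it states only that ``the derivation is lengthy, hence it is omitted''), so your proposal has to be judged on its own terms. Its formal core is correct and does reproduce the stated formula: the terms in which the logarithm is differentiated vanish because $\overrightarrow{Q}^{\epsilon}_{0,n}(\cdot|x^n)$ and $\nu^{\epsilon}_{0,n}$ have constant total mass one; the surviving term is $\int\log\big(\overrightarrow{Q}^{*}_{0,n}/\nu^{*}_{0,n}\big)\,\frac{d}{d\epsilon}\overrightarrow{Q}^{\epsilon}_{0,n}\big|_{\epsilon=0}\,\mu_{0,n}$; the chain-rule factorization $\log\big(\overrightarrow{Q}^{*}_{0,n}/\nu^{*}_{0,n}\big)=\sum_{i=0}^{n}\log\big(q_i^{*}/\nu_i^{*}\big)$ combined with the Leibniz expansion is the right bookkeeping; and your observations that the cross terms with $j>i$ vanish (zero mass of $\delta q_j$ against a $y_j$-independent integrand) while the terms with $j\le i$ reassemble into $\frac{d}{d\epsilon}\overrightarrow{Q}^{\epsilon}_{0,i}\big|_{\epsilon=0}$ integrated against $\mu_{0,i}(dx^i)$ are both accurate.

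There is, however, a genuine error exactly where you place the analytic weight of the argument. You assert that $\epsilon\mapsto\overrightarrow{Q}^{\epsilon}_{0,n}$ is affine and deduce from Theorem~\ref{convexity_properties} that $\epsilon\mapsto\mathbb{I}(\epsilon)$ is convex. This is false: the theorem perturbs \emph{all} kernels simultaneously, so $\overrightarrow{Q}^{\epsilon}_{0,n}=\otimes_{j=0}^{n}\big(q_j^{*}+\epsilon\,\delta q_j\big)$ is a polynomial of degree $n+1$ in $\epsilon$; already for $n=1$ it contains the term $\epsilon^2\,\delta q_0\otimes\delta q_1$. (Note the internal inconsistency: your own Leibniz expansion is the product rule, which would be vacuous for an affine path.) A convex functional composed with a non-affine path need not be convex, so both pillars of your rigor argument --- monotone difference quotients to get existence of the derivative, and monotone convergence to justify the interchange of limit and integral --- rest on an unsupported premise. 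The natural repair uses multilinearity: with the other kernels frozen, $q_i\mapsto\otimes_{j}q_j$ \emph{is} affine, so perturbing one kernel at a time produces an affine path in ${\cal Q}^{\bf C1}({\cal Y}_{0,n};{\cal X}_{0,n})$ along which your convexity/monotonicity argument is valid; one must then show that the derivative along the simultaneous direction is the sum of these single-coordinate derivatives, e.g.\ by exhibiting the $O(\epsilon^2)$ remainder of the polynomial expansion as $o(\epsilon)$ under an integrability hypothesis (finiteness of $\mathbb{I}_{\mu_{0,n}}$ at the relevant kernels). A secondary, more routine caveat: splitting $\mathbb{I}(\epsilon)$ into two separate logarithmic integrals requires a common dominating reference measure and finiteness of each piece; working with the single Radon--Nikodym derivative $\log\big(\overrightarrow{Q}^{\epsilon}_{0,n}/\nu^{\epsilon}_{0,n}\big)$ throughout, with $\nu^{*}_{0,n}$ as reference, avoids this issue.
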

\begin{proof}
The derivation is lengthy, hence it is omitted.
\end{proof}
\noi The constrained problem defined by (\ref{ex12}) can be reformulated using Lagrange multipliers (due to its convexity) by utilizing Lagrange Duality Theorem \cite{dluenberger69} to obtain
\begin{align}
&\overrightarrow{R}_{0,n}(D)=\inf_{\overrightarrow{Q}_{0,n}\in \overrightarrow{\cal{Q}}_{0,n}(D)} \mathbb{I}_{X^n\rightarrow{Y^n}}({\mu_{0,n}},\overrightarrow{Q}_{0,n}) \nonumber\\
&= \max_{s \leq 0}\inf_{\overrightarrow{Q}_{0,n} \in\overrightarrow{\cal{Q}}_{0,n}(D)} \Big\{ \mathbb{I}_{X^n\rightarrow{Y^n}}({\mu_{0,n}},\overrightarrow{Q}_{0,n})\nonumber\\
&- s\Big(\ell_{d_{0,n}}({\overrightarrow{Q}}_{0,n}) -(n+1)D \Big)\Big\}\nonumber
\end{align}
\noi where $s\in(\infty,0]$ is the Lagrange multiplier.\\
Since $q_i(dy;y^{i-1},x^i)\in{\cal Q}({\cal Y}_i;{\cal Y}_{0,i}\times{\cal X}_{0,i})$, one should introduce another Lagrange multiplier to obtain an optimization problem without constraints. This process is involved; hence we state the final results.\\
\noi{\bf General Recursions of Optimal Non-stationary Reproduction Distribution.}
\noi The general recursions are the following.\\
Define
\begin{align}
&g_{n,n}(x^n,y^n)=0\nonumber\\
&g_{n-k,n}(x^{n-k},y^{n-k})=-\int{p}_{n-k+1}(dx_{n-k+1};x^{n-k})\nonumber\\
&\log\bigg{(}\int{e}^{s\rho_{0,n-k+1}(x^{n-k+1},y^{n-k+1})-g_{n-k+1,n}(x^{n-k+1},y^{n-k+1})}\nonumber\\
&\qquad\qquad\times\nu^*_{n-k+1}(dy_{n-k+1};y^{n-k})\bigg{)},~k=1,\ldots,n.\label{equation30}
\end{align}
\noi For $k=0,1,\ldots,n$, $i\tri{n}-k$, we get
\begin{align}
&q_n^*(dy_n;y^{n-1},x^n)=\frac{e^{s\rho_{0,n}(x^n,y^n)}\nu^*_{n}(dy_n;y^{n-1})}{\int_{{\cal Y}_n}e^{s\rho_{0,n}(x^n,y^n)}\nu^*_{n}(dy_n;y^{n-1})}\label{equation26}\\
&q_{i}^*(dy_{i};y^{i-1},x^{i})=\frac{e^{s\rho_{0,i}(x^{i},y^{i})-g_{i,n}(x^{i},y^{i})}\nu^*_{i}(dy_{i};y^{i-1})}{\int_{{\cal Y}_{i}}e^{s\rho_{0,i}(x^{i},y^{i})-g_{i,n}(x^{i},y^{i})}\nu^*_{i}(dy_{i};y^{i-1})}.\label{equation25}
\end{align}
\noi The nonanticipative RDF is given by
\begin{align*}
&\overrightarrow{R}_{0,n}(D)=s(n+1)D-\sum_{i=0}^n\int\bigg(\int{g}_{i,n}q_{i}^*(dy_i;y^{i-1},x^i)+\\
&\log\int{e}^{s\rho_{0,i}-g_{i,n}}\nu_{i}^*(dy_i;y^{i-1})\bigg)\times\overrightarrow{Q}_{0,i-1}^*(dy^{i-1}|x^{i-1})\mu_{0,i}(dx^i)
\end{align*}

\noi {\bf Discussion.} The above recursions illustrate the nonanticipation, since $g_{i,n}(x^{i}, y^{i})$,~$i=n-k$,~$k=0,1,\ldots, n$, appearing in the exponent of the reproduction distribution (\ref{equation25}) integrate out future reproduction distributions. Note also that for the stationary case  all reproduction conditional distributions are the same and hence, $g_{i,n}(\cdot, \cdot)=0$, which implies $q_n^*(\cdot;\cdot,\cdot)$ is given by (\ref{equation26}). The above recursions are general, while depending on the assumptions imposed on the distortion function and source they can be simplified considerably.


\section{Conclusion}
In this paper, we derive an analytical closed form expression for the nonanticipative RDF for nonstationary processes, and we relate the definition of nonanticipative RDF to other works in the literature. 

\bibliographystyle{IEEEtran}
\bibliography{photis_references_nonanticipative}

\end{document}